\def\anon{0}
\def\draft{1}
\newcommand{\ignore}[1]{}
\setlist[itemize]{itemsep=-0.1em}
\setlist[enumerate]{itemsep=-0.1em}
\def\th@plain{%
  \thm@notefont{}
  \itshape 
}
\def\th@definition{%
  \thm@notefont{}
  \normalfont 
}
\newtheorem{theorem}{Theorem}[section]
\newtheorem{lemma}[theorem]{Lemma}
\newtheorem{hypothesis}[theorem]{Hypothesis}
\newtheorem{question}{Question}
\newtheorem{claim}[theorem]{Claim}
\newtheorem{definition}[theorem]{Definition}
\tiny\color{gray},
\newcommand{\FF}{\ensuremath{\mathbb F}}
\renewcommand{\epsilon}{\varepsilon}
\newcommand{\surya}[1]{\textcolor{magenta}{(Surya: #1)}}
\newcommand{\virgi}[1]{\textcolor{red}{(Virginia: #1)}}
    \newcommand{\mina}[1]{}
    \newcommand{\surya}[1]{}
    \newcommand{\yinzhan}[1]{}
    \newcommand{\virgi}[1]{}
\newcommand\cliquelist[1]{\ensuremath{#1\text{-}\mathsf{Clique}\text{-}\mathsf{Listing}}}
\newcommand{\exactclique}[1]{\ensuremath{\mathsf{Exact}\text{-}#1\text{-}\mathsf{Clique}}}
\newcommand\fourcycle{\ensuremath{4\text{-}\mathsf{Cycle}\text{-}\mathsf{Listing}}}
\newcommand\trianglelist{\ensuremath{\mathsf{Triangle}\text{-}\mathsf{Listing}}}
\newcommand{\threesum}{\ensuremath{\mathsf{3SUM}}}
\newcommand{\apsp}{\ensuremath{\mathsf{APSP}}}
\title{A Note on the Conditional Optimality of\\
Chiba and Nishizeki's Algorithms}
    \author{Anonymous}
    \author{Yael Kirkpatrick\thanks{The first author was supported in part by the National Science Foundation Graduate Research Fellowship under Grant No 2141064.} \and Surya Mathialagan\thanks{The second author was supported in part by DARPA under Agreement Number HR00112020023, NSF CNS-2154149, a Simons Investigator award, a Thornton Family Faculty Research Innovation Fellowship from MIT and a Simons Investigator Award and by Jane Street. Any opinions, findings and conclusions or recommendations expressed in this material are those of the author(s) and do not necessarily reflect the views of the United States Government or DARPA.}}
\begin{document}

\maketitle

\begin{abstract}
    In a seminal work, Chiba and Nishizeki [SIAM J. Comput. `85] developed subgraph listing algorithms for triangles, 4-cycle and $k$-cliques, where $k \geq 3.$ The runtimes of their algorithms are parameterized by the number of edges $m$ and the arboricity $\alpha$ of a graph. The arboricity $\alpha$ of a graph is the minimum number of spanning forests required to cover it. Their work introduces:
    \begin{itemize}
        \item A triangle listing algorithm that runs in $O(m\alpha)$ time.
        \item An output-sensitive $\mathsf{4}\text{-}\mathsf{Cycle}\text{-}\mathsf{Listing}$ algorithm that lists all 4-cycles in $O(m\alpha + t)$ time, where $t$ is the number of 4-cycles in the graph. 
        \item A $k\text{-}\mathsf{Clique}\text{-}\mathsf{Listing}$ algorithm that runs in $O(m\alpha^{k-2})$ time, for $k \geq 4.$ 
    \end{itemize}

    Despite the widespread use of these algorithms in practice, no improvements have been made over them in the past few decades. Therefore, recent work has gone into studying lower bounds for subgraph listing problems. 
    The works of 
    Kopelowitz, Pettie and Porat [SODA `16] and Vassilevska W. and Xu [FOCS `20] showed that the triangle-listing algorithm of Chiba and Nishizeki is optimal under the $\mathsf{3SUM}$ and $\mathsf{APSP}$ hypotheses respectively. However, it remained open whether the remaining algorithms were optimal.

    In this note, we show that in fact all the above algorithms are optimal under popular hardness conjectures. First, we show that the $\mathsf{4}\text{-}\mathsf{Cycle}\text{-}\mathsf{Listing}$ algorithm is tight under the $\mathsf{3SUM}$ hypothesis following the techniques of Jin and Xu [STOC `23], and Abboud, Bringmann and Fishcher [STOC `23] . Additionally, we show that the $k\text{-}\mathsf{Clique}\text{-}\mathsf{Listing}$ algorithm is essentially tight under the exact $k$-clique hypothesis by following the techniques of Dalirooyfard, Mathialagan, Vassilevska W. and Xu [STOC `24].
    These hardness results hold even when the number of 4-cycles or $k$-cliques in the graph is small.
\end{abstract}

\section{Introduction}


Graph pattern listing is a fundamental problem in graph algorithms with various applications. For example, listing $k$-cliques is a problem with a long history of study due to its wide variety of applications in practice  (see \cite{danisch2018listing}, and the many citations within). In social network analysis, one is often interested in listing cliques in order to understand the degree of clustering in graphs \cite{dourisboure2009extraction}. In bioinformatics, one is often interested in finding cliques to identify dense subgraphs in protein-protein interactions and biological networks \cite{fratkin2006motifcut}. Finding $k$-cycles is also valuable in many applications, such as fraud detection \cite{qiu2018real}.
Therefore, many theoretical and practical works have studied the problem of subgraph detection and enumeration \cite{danisch2018listing,listingcliquesdensest,listingcliquesnucleus,count5via3,trilistlatapy,SchankW05,ShunT15,ChuC11,bjorklund2014listing, jin2023removing, dalirrooyfard2023listing, abboud2022listing}.

In the seminal work of Chiba and Nishizeki \cite{chiba1985arboricity}, they introduced an edge-searching technique which allows for various pattern enumeration algorithms parameterized by graph arboricity\footnote{The arboricity $\alpha$ of a graph is the minimum number of spanning forests required to cover it.}. Their work provides three main algorithms for enumerating triangles, 4-cycles, and $k$-cliques. 

\begin{theorem}[Triangle enumeration]\label{thm:triangle-list}
    Given a graph $G$ with $m$ edges and arboricity $\alpha$, there exists an algorithms that enumerates all triangles in $O(m\alpha)$ time. 
\end{theorem}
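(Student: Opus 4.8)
The plan is to reproduce the vertex-elimination algorithm of Chiba and Nishizeki and then analyze it; the only non-routine ingredient is an arboricity bound on $\sum_{\{u,v\}\in E}\min\{d(u),d(v)\}$, which is exactly where the factor $\alpha$ in the running time comes from.

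\emph{The algorithm.} First I would sort the vertices as $v_1,\dots,v_n$ by non-increasing degree (counting sort, $O(m+n)$ time) and keep a Boolean ``mark'' array over the vertices, initially all false. Then, for $i=1,\dots,n$ in turn: mark every neighbour of $v_i$ in the current graph; for each current neighbour $u$ of $v_i$, taken in some fixed order, scan the current neighbourhood of $u$, report the triangle $\{v_i,u,w\}$ for every neighbour $w$ of $u$ that is marked, and then unmark $u$; finally delete $v_i$ from the graph (the array is then back to all-false on the remaining vertices).

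\emph{Correctness.} Fix a triangle $T=\{a,b,c\}$ and let $a$ be its earliest vertex in the ordering. In the iteration with $v_i=a$ the vertices $b,c$ and the edges $ab,ac,bc$ are all still present and $b,c$ are marked when the iteration starts; assuming $b$ precedes $c$ in the inner order, scanning $N(b)$ meets the marked vertex $c$, so $T$ is reported, and the ``unmark $u$ immediately after processing $u$'' step prevents $T$ from being reported again when $c$ is later processed as a centre. No iteration whose centre lies outside $T$ can report $T$ (every reported triangle contains the current $v_i$), and once $a$ is deleted $T$ is never seen again; hence each triangle is listed exactly once.

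\emph{Running time and the key bound.} Iteration $i$ costs $O\!\big(d_{G_i}(v_i)+\sum_{u\in N_{G_i}(v_i)}d_{G_i}(u)\big)$, where $G_i$ is the graph at the start of iteration $i$. I would charge each term $d_{G_i}(u)$ of the inner sum to the edge $\{v_i,u\}$; this edge is charged only once over the whole run (in the iteration that deletes its earlier endpoint), and since $v_i$ precedes $u$ we have $d_G(v_i)\ge d_G(u)$, so the charge is at most $d_{G_i}(u)\le d_G(u)=\min\{d_G(v_i),d_G(u)\}$. Thus the total inner-loop cost is at most $\sum_{\{u,w\}\in E}\min\{d_G(u),d_G(w)\}$, and all remaining work is $O(m+n)$. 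It then suffices to prove the Chiba--Nishizeki bound: decompose $E$ into $\alpha$ forests (definition of arboricity), root every tree, orient each edge from child to parent, and take the union to get an orientation of $G$ with out-degree $d^+(u)\le\alpha$ at every $u$ (at most one out-edge per forest); writing $u\to w$ for the chosen orientation of $\{u,w\}$ and using $\min\{d(u),d(w)\}\le d(u)$,
\[
\sum_{\{u,w\}\in E}\min\{d(u),d(w)\}\ \le\ \sum_{u\to w} d(u)\ =\ \sum_u d(u)\,d^{+}(u)\ \le\ \alpha\sum_u d(u)\ =\ 2\alpha m .
\]
Combining, the algorithm runs in $O(m\alpha+n)=O(m\alpha)$ time (discard isolated vertices first so $n=O(m)$). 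The algorithm and its correctness are routine bookkeeping; the real content — and the step to get right — is the running-time analysis: choosing the deletion order so the inner loop charges to $\min$-degrees, and the arboricity bound above, whose slick proof is via the out-degree-$\le\alpha$ orientation furnished by a forest decomposition.
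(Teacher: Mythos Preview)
Your proof is correct and is essentially the original Chiba--Nishizeki argument: process vertices in non-increasing degree order, scan neighbours of neighbours with a marking array, and bound the total cost by $\sum_{\{u,w\}\in E}\min\{d(u),d(w)\}\le 2\alpha m$ via a forest decomposition/orientation. Note, however, that the paper itself does not prove this theorem at all---it is quoted as a known result of Chiba and Nishizeki \cite{chiba1985arboricity}, since the paper's contribution is the matching \emph{lower} bounds, not the algorithms---so there is no ``paper's own proof'' to compare against beyond the original source you have faithfully reproduced.
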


Chiba and Nishizeki also provide an algorithm for listing 4-cycles which first preprocesses a data structure in time $O(m\alpha)$ and then outputs each 4-cycle with $O(1)$ delay. 
\begin{theorem}[4-cycle enumeration]\label{thm:CN-four-cycle}
    Given a graph $G$ with $m$ edges, arboricity $\alpha$ and $t$ 4-cycles, there exists an algorithm that lists all 4-cycles in $O(m\alpha + t)$ time.
\end{theorem}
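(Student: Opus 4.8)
The plan is to reconstruct Chiba and Nishizeki's $4$-cycle listing procedure and then bound its running time by an amortized argument whose engine is the arboricity inequality $\sum_{\{u,v\}\in E}\min\{\deg(u),\deg(v)\}=O(m\alpha)$. I would obtain this inequality in one line: decompose $E$ into $\alpha$ forests, root every tree, send each tree edge to its unique ``child'' endpoint (injectively), and note that within one forest $\sum_{e}\min\{\deg(\text{endpoints of }e)\}\le\sum_{e}\deg(\mathrm{child}(e))\le\sum_v\deg(v)=2m$, so summing over the $\alpha$ forests gives $2\alpha m$.

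Then I would set up the algorithm. Compute all degrees and sort the vertices $v_1,\dots,v_n$ with $\deg(v_1)\ge\cdots\ge\deg(v_n)$ (counting sort, $O(m)$ time after discarding isolated vertices, so $n\le 2m$). Maintain an all-zero counter array $c[\cdot]$ over the vertices, a list $L_w$ (initially empty) for every vertex $w$, and a ``dirty list'' of the entries touched so far. For $i=1,\dots,n$, process $v_i$ as follows: for every surviving neighbor $u$ of $v_i$ and every surviving neighbor $w\ne v_i$ of $u$, increment $c[w]$ and append $u$ to $L_w$; then, for every $w$ with $c[w]\ge 2$, run through the $\binom{c[w]}{2}$ pairs $\{u,u'\}\subseteq L_w$ and output the $4$-cycle $v_i\,u\,w\,u'\,v_i$; finally use the dirty list to zero out $c$ and empty the lists $L_w$, and delete $v_i$ from $G$.

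For correctness I would anchor each $4$-cycle at its lowest-index vertex. If $C=v_i\,a\,b\,c\,v_i$ with $b$ opposite $v_i$ and $v_i$ of smallest index, then when $v_i$ is processed $a,b,c$ are all still present (their indices exceed $i$), so the length-$2$ walks $v_i\!\to\!a\!\to\!b$ and $v_i\!\to\!c\!\to\!b$ both reach $b$; hence $\{a,c\}\subseteq L_b$ and $C$ is output. Conversely any pair $\{u,u'\}\subseteq L_w$ enumerated at step $i$ certifies the edges $v_iu,uw,wu',u'v_i$ of a genuine $4$-cycle (all four vertices are distinct in a simple graph), and since $v_i$ is deleted afterwards no later step can rebuild a length-$2$ path through $v_i$; so of the two ``opposite-pair'' ways to read off a $4$-cycle, only the one anchored at its lowest-index vertex is ever produced, and every $4$-cycle is listed exactly once.

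Finally I would do the time accounting. Ignoring resets, step $i$ costs $O\!\big(\sum_u\deg_i(u)\big)$ for the traversal (sum over surviving neighbors $u$ of $v_i$, with $\deg_i(u)$ the surviving degree of $u$), plus $O\!\big(\sum_w\binom{c[w]}{2}\big)$ plus $O(1)$ per touched entry for reporting and for the reset. The crucial point is that a surviving neighbor $u$ of $v_i$ has index $>i$, so $\deg(u)\le\deg(v_i)$ and therefore $\deg_i(u)\le\deg(u)=\min\{\deg(v_i),\deg(u)\}$; summing the traversal cost over $i$ and re-indexing by the edge $\{v_i,u\}$ bounds it by $\sum_{\{x,y\}\in E}\min\{\deg(x),\deg(y)\}=O(m\alpha)$. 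The term $\sum_i\sum_w\binom{c[w]}{2}$ equals exactly $t$ by the correctness count, the ``$O(1)$ per touched entry'' contributions are dominated by the traversal cost, and the initial sort is $O(m)\subseteq O(m\alpha)$ since $\alpha\ge 1$, giving the claimed $O(m\alpha+t)$. The hard part here is this amortized analysis, not the algorithm: one must see that processing vertices in non-increasing degree order \emph{and} deleting each after processing is exactly what converts the naive per-step cost into a sum of $\min(\deg(x),\deg(y))$ over edges, after which the arboricity inequality finishes it off. Two implementation details also need to be stated explicitly: the reporting must use the stored lists $L_w$ rather than only the counts $c[w]$, so each cycle is emitted in $O(1)$ amortized time, and the counter array must be cleared lazily through the dirty list, never rescanned.
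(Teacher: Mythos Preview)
The paper does not contain its own proof of this theorem: it is stated as a result of Chiba and Nishizeki \cite{chiba1985arboricity} and simply cited. Your proposal is a faithful and correct reconstruction of Chiba and Nishizeki's original algorithm and analysis --- the degree-sorted vertex processing with deletion, the counter/list data structures for length-$2$ paths, the anchoring of each $4$-cycle at its earliest-processed vertex for correctness, and the amortized bound via $\sum_{\{u,v\}\in E}\min\{\deg(u),\deg(v)\}\le 2m\alpha$ are all exactly as in the source. Since there is no in-paper proof to compare against, there is nothing further to contrast; your argument is complete and would serve as a self-contained proof of the cited statement.
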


Finally, Chiba and Nishizeki extend their result for triangle enumeration to $k$-clique enumeration, for any $k\geq 3$.
\begin{theorem}[$k$-clique enumeration]
    Assume $k \geq 3$. Given a graph $G$ with $m$ edges and arboricity $\alpha$, there exists an algorithms that enumerates all $k$-cliques in $O(m\alpha^{k-2})$ time. 
\end{theorem}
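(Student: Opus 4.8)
The plan is induction on $k$. The base case is $k=3$, which is exactly Theorem~\ref{thm:triangle-list} and runs in $O(m\alpha) = O(m\alpha^{k-2})$ time. For $k \ge 4$, I reduce to $(k-1)$-clique listing as follows: sort the vertices $v_1, \dots, v_n$ in non-increasing order of degree (linear time by bucket sort), and process them in that order, maintaining the graph $G_i$ obtained from $G$ by deleting $v_1, \dots, v_{i-1}$. When $v_i$ is processed, build the induced subgraph $H_i := G_i[N_{G_i}(v_i)]$ on its current neighborhood, recursively list all $(k-1)$-cliques of $H_i$, and report each one together with $v_i$ as a $k$-clique; then delete $v_i$. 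The subgraph $H_i$ is built by marking all vertices of $N_{G_i}(v_i)$, then, for each $u \in N_{G_i}(v_i)$, scanning its adjacency list in $G_i$ and retaining every edge $uw$ with $w$ marked.

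Correctness follows from a ``first vertex in the ordering'' argument: a $k$-clique $\{v_{i_1}, \dots, v_{i_k}\}$ with $i_1 < \dots < i_k$ is reported exactly once, when $v_{i_1}$ is processed, because at that moment $v_{i_2}, \dots, v_{i_k}$ are still present and all lie in $N(v_{i_1})$, so they induce a $(k-1)$-clique of $H_{i_1}$ which is discovered by the recursive call and then completed by prepending $v_{i_1}$. It is not reported by any earlier $v_j$ (the recursion on $H_j$ only emits $k$-cliques containing $v_j$, and $v_j$ is not in this clique) and not by any later $v_j$ ($v_{i_1}$ has already been deleted, so it is absent from $G_j$).

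For the running time I would use two facts: (i) arboricity is monotone under taking subgraphs, so $\alpha(H_i) \le \alpha(G_i) \le \alpha$, which lets the induction hypothesis apply to each $H_i$; and (ii) the Chiba--Nishizeki inequality $\sum_{(u,v)\in E}\min(d(u),d(v)) = O(m\alpha)$. Building $H_i$ costs $O\!\left(|N_{G_i}(v_i)| + \sum_{u \in N_{G_i}(v_i)} d_{G_i}(u)\right)$, and for an edge $uv$ with $u = v_j$, $v = v_i$, $i<j$, its contribution $d_{G_i}(u)$ to this sum is at most $d_G(u) \le d_G(v) = \min(d(u),d(v))$, where the middle inequality is the non-increasing degree ordering; since each edge contributes to exactly one such sum (when its lower-indexed endpoint is processed), the total construction cost over all $i$ is $O(m\alpha)$. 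The same argument bounds $\sum_i |E(H_i)|$ by $O(m\alpha)$ (in fact $\sum_i |E(H_i)|$ equals the number of triangles of $G$, which is $O(m\alpha)$ by (ii)). Writing $T_k(G)$ for the runtime and $m_i := |E(H_i)|$, and using the induction hypothesis $T_{k-1}(H) = O(|E(H)|\cdot\alpha(H)^{k-3})$ together with $\alpha(H_i)\le\alpha$,
\[ T_k(G) \;\le\; O(m\alpha) + \sum_i T_{k-1}(H_i) \;\le\; O(m\alpha) + O(\alpha^{k-3})\sum_i m_i \;=\; O(m\alpha) + O(\alpha^{k-3})\cdot O(m\alpha) \;=\; O(m\alpha^{k-2}), \]
the last step since $k \ge 4$ gives $\alpha^{k-2} \ge \alpha$; the extra $O(1)$ per reported clique used to prepend $v_i$ is subsumed because the number of $k$-cliques is itself $O(m\alpha^{k-2})$ (same induction).

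I expect the main obstacle to be exactly point (ii): showing that the per-vertex cost of constructing the neighborhood subgraphs, and the sizes $\sum_i m_i$ fed into the recursion, telescope to $O(m\alpha)$ rather than to something like $\sum_v d(v)^2$. This is where the non-increasing degree ordering and the lazy deletion of processed vertices are essential, and where the combinatorial arboricity inequality does the real work; the correctness argument and the induction on $k$ are then routine.
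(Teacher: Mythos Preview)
The paper does not prove this theorem; it is quoted from Chiba and Nishizeki \cite{chiba1985arboricity} as background, and the paper's contribution is the matching lower bound (\autoref{thm:k_clique_lb}). Your proposal is essentially a faithful reconstruction of the original Chiba--Nishizeki algorithm and analysis: process vertices in non-increasing degree order, recurse on the neighborhood subgraph, and control the total work via the inequality $\sum_{(u,v)\in E}\min(d(u),d(v))=O(m\alpha)$ together with monotonicity of arboricity under subgraphs. The argument is correct.

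One small slip: in the chain ``$d_{G_i}(u)\le d_G(u)\le d_G(v)=\min(d(u),d(v))$'' the final equality is backwards, since $d_G(u)\le d_G(v)$ means $\min(d(u),d(v))=d_G(u)$, not $d_G(v)$. The intended (and correct) bound is simply $d_{G_i}(u)\le d_G(u)=\min(d(u),d(v))$, so the conclusion $\sum_i\sum_{u\in N_{G_i}(v_i)}d_{G_i}(u)=O(m\alpha)$ stands unchanged.
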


All the above algorithms do not require their user to compute the arboricity of the graph, but rather benefit from the fact that in many applications, input graphs implicitly have low arboricity \cite{shi2021parallel}. For instance, planar graphs have constant arboricity, resulting in a linear time triangle and 4-cycle listing algorithms. Because of this, in addition to the simplicity of these algorithms, all the above algorithms are frequently used in practice. Alon, Yuster and Zwick \cite{colorcoding} use Chiba and Nishizeki's triangle listing algorithm when dealing with planar graphs. Yin et al. \cite{localclustering} use Chiba and Nishizeki's $k$-clique listing algorithm as part of new algorithms for local graph clustering. These techniques are further used by Alon, Yuster and Zwick \cite{findingcycles}, where they extend the ideas of Chiba and Nishizeki's 4-cycle enumeration algorithm to find and count cycles of various lengths.

Despite their widespread use in practice, little improvement has been made in the past few decades over the runtime of these algorithms. Some polylogarithmic improvements have been achieved over the $O(m\alpha)$ runtime for triangle listing \cite{kopelowitz2015dynamic, eppstein2017}, but no polynomial improvements are known for any of the algorithms. Therefore, various works have recently studied conditional lower bounds for triangle listing. In fact, the work of Kopelowitz, Pettie and Porat \cite{kopelowitz2016higher} showed that \autoref{thm:triangle-list} is conditionally optimal under \threesum{}, and Vassilevska W. and Xu~\cite{williams2020monochromatic} showed it is optimal under \apsp{}.

However, no similar lower bound is known for the remaining algorithms. This leads us to the following questions. 

    \begin{question}\label{qn:4-cycle}
        Is there a $4$-cycle listing algorithm with $O(m^{1-\epsilon}\alpha)$ pre-processing time, and $m^{o(1)}$ delay to list each $4$-cycle, for a constant $\epsilon > 0$?
    \end{question}

In the case of the $k$-clique listing algorithms for $k \geq 3$, it is possible that a graph $G$ with $m$ edges and arboricity $\alpha$ indeed has $\Theta(m\alpha^{k-2})$ $k$-cliques. For the case of $k = 3$, the results of \cite{kopelowitz2016higher}
and \cite{williams2020monochromatic} shows that $O(m^{1-o(1)}\alpha)$ time is necessary even when the graph has $o(m^{1-o(1)} \alpha)$ triangles. Naturally, we can ask if a similar result holds for $k \geq 4$. 

\begin{question}\label{qn:k-clique}
    Is there a $k$-clique listing algorithm with $O(m^{1-\epsilon} \alpha^{k-2})$ runtime on graphs with $o(m^{1-\epsilon} \alpha^{k-2})$ $k$-cliques, for some $\epsilon > 0$?
\end{question}

\subsection{Our Contributions}
In this work, we answer both Questions \ref{qn:4-cycle} and \ref{qn:k-clique} and show that all the algorithms presented by Chiba and Nishizeki \cite{chiba1985arboricity} are in fact essentially optimal. 

First, we answer Question \ref{qn:4-cycle} in the affirmative and show that the 4-cycle enumeration algorithm is optimal. In previous work, Abboud, Bringmann and Fischer \cite{abboud2023stronger} and Jin and Xu \cite{jin2023removing}, showed that under the \threesum{} hypothesis, a $\Omega(m^{4/3 - o(1)})$ pre-processing time is in fact necessary for the 4-cycle listing problem with $m^{o(1)}$ delay. However, Chiba and Nishizeki's algorithm in \autoref{thm:CN-four-cycle} beats this runtime when $\alpha = o(m^{1/3})$. We extend their result and prove the following theorem.

\begin{restatable}{theorem}{FourCycleLB}\label{thm:four_cycle_lb}
    Under the \threesum{} hypothesis, there is no algorithm with $O(m^{1-\epsilon}\alpha)$ pre-processing time and $m^{o(1)}$ delay that solves 4-cycle enumeration on graphs with arboricity $\alpha = O(m^{1/3-\mu})$, for any constants $\epsilon,\mu > 0.$
\end{restatable}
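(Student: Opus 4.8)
The plan is to prove this by a fine-grained reduction from \threesum{} that builds hard instances all along the density/arboricity trade-off curve. For a constant $\beta\in(0,1/3)$ I want a family of $m$-edge graphs of arboricity $m^{\beta+o(1)}$ on which $4$-cycle enumeration requires $m^{1-o(1)}\alpha$ preprocessing whenever the delay is $m^{o(1)}$, matching \autoref{thm:CN-four-cycle}. The building block is the existing \threesum{}-hardness construction for $4$-cycle enumeration of Jin--Xu~\cite{jin2023removing} and Abboud--Bringmann--Fischer~\cite{abboud2023stronger}: from a \threesum{} instance of size $n$ it outputs a graph $G_n$ with $m_n=n^{3/2+o(1)}$ edges, arboricity $m_n^{1/3+o(1)}$, and $m_n^{1+o(1)}$ $4$-cycles, whose $4$-cycles encode the \threesum{} solutions. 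I need both the arboricity bound $\mathrm{arb}(G_n)\le m_n^{1/3+o(1)}$ and the near-linear bound $m_n^{1+o(1)}$ on the number of $4$-cycles in $G_n$; note the arboricity bound is also best possible, since \autoref{thm:CN-four-cycle} runs on $G_n$ in $O(m_n\cdot\mathrm{arb}(G_n)+m_n^{1+o(1)})$ time, so a smaller arboricity would contradict the $m_n^{4/3-o(1)}$ preprocessing lower bound that holds for $G_n$.

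Fix constants $\epsilon,\mu>0$ (the statement is vacuous unless $\mu<1/3$). Choose any constant $\beta\in(0,1/3-\mu)$ and set $s:=N^{(1-3\beta)/(1+\beta)}$, so that $1\le s\le N^{1-\Omega(1)}$ (the exponent lies in $(0,1)$ as $0<\beta<1/3$). First I would apply the standard hashing-based \threesum{} self-reduction (as in~\cite{kopelowitz2016higher,abboud2023stronger}, or equivalently via convolution-\threesum{}, which splits cleanly into index blocks) to decompose the size-$N$ instance into $K=\widetilde{O}(s^2)$ sub-instances, each of size $n:=N/s$, such that an $\widetilde{O}(N)$-time combine of their answers solves the original instance and the total naive cost of the $K$ sub-instances is $\widetilde{O}(N^2)$; hence, under the \threesum{} hypothesis, jointly solving all $K$ sub-instances takes $N^{2-o(1)}$ time. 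I would then form the vertex-disjoint union $G:=\bigsqcup_{i=1}^{K}G_n^{(i)}$ of the block graphs. Since the arboricity of a vertex-disjoint union is the maximum over the blocks, $G$ has arboricity $\alpha=m_n^{1/3+o(1)}=(N/s)^{1/2+o(1)}$; a short computation gives $m:=K\,m_n=s^{1/2}N^{3/2+o(1)}$, $m\alpha=N^{2+o(1)}$, and $m=N^{2/(1+\beta)+o(1)}$ with $2/(1+\beta)$ a constant smaller than $2$, so $\alpha=m^{\beta+o(1)}=O(m^{1/3-\mu})$ for $N$ large (using $\beta<1/3-\mu$), while the number of $4$-cycles of $G$ is $t:=K\,m_n^{1+o(1)}=m^{1+o(1)}$.

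Next, suppose for contradiction that some algorithm solves $4$-cycle enumeration on every arboricity-$O(m^{1/3-\mu})$ graph with $O(m^{1-\epsilon}\alpha)$ preprocessing and $m^{o(1)}$ delay. Running it on $G$, it lists all $t$ $4$-cycles of $G$ in time $O(m^{1-\epsilon}\alpha)+t\cdot m^{o(1)}=O\!\left(m^{-\epsilon}\cdot m\alpha\right)+m^{1+o(1)}$, which is $N^{2-\Omega(1)}$ because $m\alpha=N^{2+o(1)}$, $m=N^{\Omega(1)}$, and $\epsilon>0$. Reading off, from the $4$-cycles lying inside each block $G_n^{(i)}$, the answer to sub-instance $i$, and then combining, solves the original size-$N$ \threesum{} instance. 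Adding the cost of constructing the sub-instances and $G$ ($\widetilde{O}(N)+\widetilde{O}(m)=N^{2-\Omega(1)}$) and of the combine ($\widetilde{O}(N)$), the whole procedure is a randomized $N^{2-\Omega(1)}$-time algorithm for \threesum{}, contradicting the \threesum{} hypothesis (the only randomness is in the hashing step, which is harmless, as the \threesum{} hypothesis is standardly taken to rule out randomized $O(N^{2-\epsilon})$ algorithms). I would emphasize that it is the \emph{tightness} of the decomposition --- $m\alpha=N^{2+o(1)}$, not a lossy $m\alpha=N^{2+\Omega(1)}$ --- that makes the argument work for all $\epsilon,\mu>0$ simultaneously, which is why I route through many small sub-instances rather than padding a single hard instance with low-arboricity filler.

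I expect the main obstacle to be the two quantitative claims about the base instance $G_n$ from the first paragraph: that it has arboricity $m_n^{1/3+o(1)}$ and only $m_n^{1+o(1)}$ $4$-cycles. The bound on the number of $4$-cycles is exactly what the additive-structure-removal machinery of~\cite{jin2023removing,abboud2023stronger} is designed to deliver --- without it an $m$-edge, arboricity-$m^{1/3}$ graph could carry $\Theta(m^{4/3})$ $4$-cycles, and then the term $t\cdot m^{o(1)}$ above would already overshoot the $N^{2-\Omega(1)}$ budget and break the reduction --- so the crux is to confirm that this near-linear bound, together with a bound of $m_n^{1/3+o(1)}$ on the maximum degree (which yields the arboricity bound), holds for their construction and is inherited by every block of the \threesum{} decomposition. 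The remaining ingredients --- the \threesum{} self-reduction and the parameter arithmetic --- are routine.
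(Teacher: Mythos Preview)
Your plan hinges on a building block that the cited papers do not supply. You posit a single graph $G_n$, produced from a size-$n$ \threesum{} instance, with $m_n=n^{3/2+o(1)}$ edges, arboricity $m_n^{1/3+o(1)}$, \emph{and} only $m_n^{1+o(1)}$ many $4$-cycles. What \cite{jin2023removing} actually proves (their Lemma~6.1, quoted here as \autoref{lm:triangletofourcycle}) is a family of $n^{3\sigma}$ instances with max degree $O(n^{1/2-\sigma})$ and $O(n^{2-\sigma})$ short cycles \emph{in total}. The only value of $\sigma$ that makes a single instance have arboricity $m^{1/3}$ is $\sigma\to 0$, and there the $4$-cycle count is $n^{2+o(1)}=m_n^{4/3+o(1)}$, not $m_n^{1+o(1)}$. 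Feeding that into your arithmetic, each block carries $(N/s)^{2+o(1)}$ $4$-cycles, the union has $t=K\cdot(N/s)^{2+o(1)}=N^{2+o(1)}$, and the listing term $t\cdot m^{o(1)}$ is already $N^{2+o(1)}$: the step you flagged as ``the main obstacle'' in fact fails. Layering a \threesum{} self-reduction on top does nothing for this --- the subquadratic cycle bound in \cite{jin2023removing,abboud2023stronger} comes from the additive-combinatorics step \emph{inside} their reduction, not from splitting \threesum{} into pieces, and that step is inseparable from the parameter $\sigma$ that simultaneously lowers the degree.

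The paper's proof sidesteps this by invoking \autoref{lm:triangletofourcycle} directly with $\sigma=\frac{1-3\mu}{2-4\mu}\in(0,1/2)$: the $n^{3\sigma}$ triangle instances it produces already have per-instance max degree $O(n^{1/2-\sigma})$ and aggregate $4$-cycle count $O(n^{2-\sigma})$, which is genuinely subquadratic. Each tripartite instance is then turned into a $4$-cycle instance by duplicating one part and adding a perfect matching (so triangles become $4$-cycles), and padded with disjoint copies of the dense $4$-cycle-free graphs of \autoref{thm:4-cycle-free} to pin down $m$ and $\Delta$ as $\Theta(n^{1.5-2\sigma})$ and $\Theta(n^{1/2-\sigma})$, hence $\Delta=\Theta(m^{1/3-\mu})$. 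Summing $m^{1-\epsilon}\alpha$ over the $n^{3\sigma}$ instances then gives $n^{2-\Omega(1)}$, and the listing term is $n^{2-\sigma+o(1)}$. In short, the ``many small sub-instances'' you want are already the output of Lemma~6.1; your separate self-reduction is both redundant and, with the $\sigma=0$ base you need for arboricity $m^{1/3}$, insufficient.
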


Next, we show that the $k$-clique listing algorithm of Chiba and Nishizeki is essentially optimal. We derive our hardness results from the \emph{Exact-$k$-Clique hypothesis}. 

\begin{hypothesis}[Exact-$k$-Clique hypothesis]\label{hyp:exact_k_clique}
For a constant $k\geq 3$, let $\exactclique{k}$ be the problem that given an $n$-node graph with edge weights in $\{-n^{100k},\dots, n^{100k}\}$, asks to determine whether the graph contains a $k$-clique whose edges sum to $0$.
Then, $\exactclique{k}$ requires $n^{k-o(1)}$ time, on the word-RAM model of computation with $O(\log n)$ bit words. 
\end{hypothesis}

The Exact-$k$-Clique hypothesis is among the popular hardness hypotheses in fine-grained complexity. It has been used to prove hardness for output-sensitive $k$-clique listing~\cite{dalirrooyfard2023listing}, the Orthogonal Vectors problem in moderate dimensions \cite{abboud2018more} and join queries in databases \cite{BringmannCM22}. Moreover, due to known reductions  (see e.g. \cite{vsurvey}), the Exact-$k$-Clique hypothesis is at least as believable as  the Max-Weight-$k$-Clique hypothesis which is used in many previous papers (e.g. \cite{AbboudWW14,BackursDT16,BackursT17,LincolnWW18,BringmannGMW20}). 

In this work, we follow the techniques of Dalirrooyfard, Mathialagan, Vassilevska W. and Xu~\cite{dalirrooyfard2023listing} and reduce \exactclique{k} to the problem of $k$-clique listing in graphs of bounded degree, to prove the following theorem.

\begin{restatable}{theorem}{KCliqueLB}\label{thm:k_clique_lb}
    Let $\alpha = n^{1-\delta}$ be the arboricity of a given graph. Under the \exactclique{k} hypothesis, $\Omega(m^{1-O(\delta)}\alpha^{k-2})$ time is required to list all $k$-cliques in the graph. This is true even when the graph has $o(m^{1-O(\delta)}\alpha^{k-2})$ $k$-cliques.
\end{restatable}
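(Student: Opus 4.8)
The plan is to reduce \exactclique{k} to the problem of listing all $k$-cliques in a graph of bounded maximum degree (hence bounded arboricity), following the hashing-based construction of Dalirrooyfard, Mathialagan, Vassilevska W.\ and Xu~\cite{dalirrooyfard2023listing} while keeping track of the arboricity of the produced instance. Given an \exactclique{k} instance on $N$ nodes, a standard color-coding/splitting step lets me assume it is $k$-partite with parts $V_1,\dots,V_k$ of size $N$, and a standard isolation step (random weight perturbation) lets me assume it has at most one zero-weight $k$-clique. The total weight of a $k$-clique is a sum of $\binom{k}{2}$ edge weights, so it lies in a window $[-W,W]$ with $W = N^{O(k)}$. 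I then pick a uniformly random prime $p$ from an interval $[P,2P]$, with $P = N^{\Theta(\delta)}$ chosen at the very end; since an integer of magnitude at most $W$ has $O(\log W)$ prime divisors, a fixed nonzero clique weight is divisible by $p$ with probability only $\tO(1/P)$, whereas a zero-weight clique is always divisible by $p$. The goal is a sparse graph whose $k$-cliques correspond to the $k$-cliques of the original graph whose total weight is $\equiv 0 \pmod p$.

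The technical core is the gadget. I build a $k$-partite graph $H$ whose part $i$ contains copies $(u,\vec s)$ for $u \in V_i$, with $\vec s$ ranging over a ``partial-sum state'' in $\ZZ_p^{\Theta(k)}$: decompose $K_k$ into $\Theta(k)$ Hamiltonian cycles (adding a perfect matching of dummy weight-$0$ edges when $k$ is even, so that all degrees become even), and let the coordinates of $\vec s$ record the running edge-weight sums modulo $p$ along the cycles up to $u$'s position. The edge of $H$ between $(u,\vec s)$ in part $i$ and $(u',\vec s\,')$ in part $j$ is present exactly when $u,u'$ are adjacent in the original graph, the coordinates of $\vec s,\vec s\,'$ belonging to the unique cycle on which $\{i,j\}$ is a consecutive pair are consistent with adding $w(u,u')\bmod p$, and the remaining coordinates agree according to the walk structure. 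One then checks that the $k$-cliques of $H$ are in bijection with pairs consisting of a $k$-clique $\{u_1,\dots,u_k\}$ of the original graph with $\sum_{i<j}w(u_i,u_j)\equiv 0\pmod p$ together with its (unique) consistent state assignment. Consequently $H$ has $1 + (\text{spurious cliques})$ many $k$-cliques with $\EE[\#\text{spurious}] = \tO(N^k/P)$; every vertex of $H$ has degree $N\cdot p^{O(k)}$; $H$ has $n = N\cdot p^{\Theta(k)}$ vertices and $m = N^2\cdot p^{\Theta(k)}$ edges; its arboricity $\alpha$ is a $\mathrm{poly}(p)$-factor below $n$, i.e.\ $\alpha = n^{1-\Theta(\log_N p)}$ (the densest subgraphs live between two blown-up parts); and $H$ is built in $O(m)$ time. (To make $\alpha$ even smaller relative to $n$, one first partitions each $V_i$ into equal buckets and takes the disjoint union of the gadgets over all choices of one bucket per part.)

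To finish, run a hypothetical $k$-clique listing algorithm on $H$ and, for each clique it outputs, check in $O(1)$ time whether the underlying original clique has weight exactly $0$; this decides \exactclique{k} with high probability over $p$ (and is derandomized by standard means) in time $O\big(m + (\text{listing time}) + \#\{k\text{-cliques of }H\}\big)$. Suppose, for contradiction, that there were an algorithm listing all $k$-cliques in $O(m^{1-\epsilon}\alpha^{k-2})$ time on graphs of arboricity $n^{1-\delta}$, for some constants $\epsilon, \delta > 0$ with $\delta$ small enough in terms of $\epsilon$. Choosing $P = N^{\Theta(\epsilon)}$ places the arboricity at $n^{1-\delta}$ with $\delta = \Theta(\epsilon)$ and makes both the construction time and $m^{1-\epsilon}\alpha^{k-2}$ equal to $O(N^{k-\epsilon'})$, while $\#\{k\text{-cliques of }H\} = \tO(N^{k-\Theta(\epsilon)})$, which for an appropriate choice of the constants is simultaneously $o(m^{1-O(\delta)}\alpha^{k-2})$ and $O(N^{k-\epsilon'})$, for some $\epsilon' > 0$. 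The resulting $O(N^{k-\epsilon'})$-time algorithm for \exactclique{k} contradicts the hypothesis; this gives the claimed $\Omega(m^{1-O(\delta)}\alpha^{k-2})$ lower bound, and the bound on $\#\{k\text{-cliques of }H\}$ shows it persists even when the graph has only $o(m^{1-O(\delta)}\alpha^{k-2})$ $k$-cliques.

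The main obstacle is the gadget itself: one must encode a condition that genuinely couples all $\binom{k}{2}$ edges of a $k$-clique using only \emph{pairwise} edge constraints, while keeping the maximum degree (really, the density of the densest subgraph) down to a $\mathrm{poly}(p)$ factor above $N$, since it is precisely this density that controls the arboricity and hence how close the lower bound gets to the ideal $m\alpha^{k-2}$. Getting the Hamiltonian-decomposition bookkeeping right (especially the dummy matching for even $k$) and verifying that the ``free'' state coordinates in a pairwise constraint do not inflate the degree beyond $\mathrm{poly}(p)$ is where the real work lies; the remaining choice of $P$ and the arithmetic relating $N$, $m$, $\alpha$, and the output size are routine.
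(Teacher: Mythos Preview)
Your approach is genuinely different from the paper's, and as written it has a concrete gap.

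The paper does \emph{not} build a vertex-blow-up gadget. It keeps the original vertex set, applies the edge hash $w'(v_i,v_j)=x\cdot w(v_i,v_j)+y_{v_i,j}+y_{v_j,i}$ of \cite{dalirrooyfard2023listing} so that the $\binom{k}{2}$ edge weights on any non-zero clique behave independently, and then buckets the weight range into $s$ equal intervals. Restricting each of the $\binom{k}{2}$ color-pairs to one interval produces $O(s^{\binom{k}{2}-1})$ sub-instances (only tuples whose interval-sum contains $0$ are relevant), each on $n$ vertices with maximum degree $O(n/s)$ and, by the independence lemma, $O(n^k/s^{\binom{k}{2}})$ $k$-cliques. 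Listing in every sub-instance and balancing $s=n^{2\epsilon/(k^2-3k+2)}$ against a hypothetical $O(n^{1-\epsilon}\Delta^{k-1})$ lister gives the contradiction. There is no partial-sum bookkeeping at all.

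Your gadget, by contrast, encodes partial sums as vertex labels via a Hamiltonian decomposition, and this is where the problem lies. With one partial-sum coordinate per Hamiltonian cycle, traversing any cycle $C$ inside a $k$-clique of $H$ and closing it forces $\sum_{e\in C} w(e)\equiv 0\pmod p$ \emph{for that cycle individually}; across the $(k-1)/2$ cycles this is $(k-1)/2$ constraints, not the single constraint $\sum_e w(e)\equiv 0$. The zero-weight $k$-clique you are hunting need not have every Hamiltonian-cycle sum vanish modulo $p$, so it need not appear in $H$: the reduction loses completeness. The phrase ``the remaining coordinates agree according to the walk structure'' cannot repair this, because for a cycle $C'$ on which $i$ and $j$ are non-adjacent, the difference of their $C'$-partial-sums depends on edges incident to \emph{other} vertices and hence cannot be checked by a pairwise edge. (Your parenthetical ``so that all degrees become even'' hints at an Eulerian-circuit encoding, which \emph{would} collapse everything to a single closing constraint; but that is not what you describe, and your clique-count estimate $\tO(N^k/P)$ already presupposes the single-constraint version.) Until the gadget is specified so that it enforces exactly the total-sum condition---either by concatenating the cycles into one Eulerian walk, or by enumerating the $p^{(k-1)/2-1}$ cycle-sum vectors with zero total and offsetting the closing edges---the claimed bijection is false as stated, and the ``real work'' you defer is precisely where the argument currently breaks.
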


\section{Preliminaries}

\subsection{Graph Theory Tools and Definitions}
Throughout this paper, we focus on a graph parameter called graph \emph{arboricity}, which intuitively captures the density of a graph. Formally, the arboricity of a graph is defined as follows.
\begin{definition}
    Given a graph $G$, define its \emph{arboricity} $\alpha(G)$ to be the minimum number of edge-disjoint spanning forests into which $G$ can be decomposed. 
\end{definition}

When $G$ is clear from context we denote its arboricity simply by $\alpha$. In our theorems we often use degree bounded graphs, in which case the following claim is useful to relate the graph arboricity to the maximum degree.

\begin{claim}
    $\alpha(G) \leq \max_v \deg(v)$.
\end{claim}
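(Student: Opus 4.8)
The plan is to bound the arboricity by exhibiting an explicit decomposition of $G$ into few forests, or equivalently by invoking the Nash-Williams characterization. I would first recall that the Nash-Williams theorem states
\[
\alpha(G) = \max_{H \subseteq G,\ |V(H)| \geq 2} \left\lceil \frac{|E(H)|}{|V(H)| - 1} \right\rceil,
\]
where the maximum ranges over all subgraphs $H$ of $G$ on at least two vertices. So it suffices to show that for every such subgraph $H$ we have $|E(H)| \leq (\,|V(H)| - 1\,) \cdot \max_v \deg_G(v)$.

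Fix a subgraph $H$ with vertex set $S = V(H)$ and write $\Delta = \max_v \deg_G(v)$ and $n' = |S|$. Each vertex of $S$ has degree at most $\Delta$ \emph{in $H$} (since its degree in $H$ is at most its degree in $G$), so by the handshake lemma $|E(H)| = \tfrac12 \sum_{v \in S} \deg_H(v) \leq \tfrac12 n' \Delta$. I then want to conclude $\tfrac12 n' \Delta \leq (n'-1)\Delta$, i.e. $n' \leq 2(n'-1)$, which holds precisely when $n' \geq 2$ — and that is exactly the range over which the Nash-Williams maximum is taken. Hence $|E(H)|/(|V(H)|-1) \leq \Delta$, and taking the ceiling and the maximum gives $\alpha(G) \leq \Delta = \max_v \deg(v)$.

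The only mild subtlety — and the step I would flag as the ``main obstacle,'' though it is really just a matter of care rather than difficulty — is making sure the bound $|E(H)| \leq (|V(H)|-1)\Delta$ is applied only to subgraphs with at least two vertices, since for a single vertex the denominator $|V(H)|-1$ vanishes; this is harmless because isolated vertices contribute no edges and the Nash-Williams formula already restricts to $|V(H)| \geq 2$. Alternatively, if one prefers to avoid quoting Nash-Williams, a direct greedy argument works: orient $G$ acyclically and build forests by repeatedly extracting a spanning forest of the remaining graph; each removal drops the maximum degree, and after at most $\Delta$ rounds no edges remain. Either route is short; I would write up the Nash-Williams version as it is the cleanest.
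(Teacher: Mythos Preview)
Your argument is correct. Both the Nash-Williams route and the greedy alternative you sketch are valid.

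The paper itself takes precisely your \emph{alternative} route rather than your preferred one: it simply observes that repeatedly extracting a spanning forest decreases the maximum degree by at least one each time (since every non-isolated vertex is incident to at least one edge of a spanning forest), so after $\Delta$ rounds all edges are gone. Your Nash-Williams write-up is perfectly clean and arguably more conceptual, but it invokes a nontrivial theorem as a black box; the paper's one-line greedy argument is more elementary and self-contained, which fits its role as a throwaway preliminary. If you want to match the paper's spirit, promote your alternative to the main argument; otherwise, either version is fine.
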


The proof of this claim follows from noting that when deconstructing a graph into spanning forest, every tree reduces the maximum degree of the graph by at least one.

\subsection{Main Problems}
Next, we define the main subgraph enumeration problems we study in this paper.
\begin{definition}[\fourcycle{}]
    Given a graph $G = (V, E)$, list all $4$-cycles in the graph.
\end{definition}

\begin{definition}[\cliquelist{k}]
    Given a graph $G = (V, E)$, list all $k$-cliques in the graph. 
\end{definition}

\subsection{Main Hypotheses}
Finally, we define the hypotheses on which we base our reductions in this paper.

\begin{definition}[\threesum{}]
    Given $n$ integers in the range $\{-n^c, \dots, n^c\}$ for constant $c$, determine if there exist three integers that sum to zero.
\end{definition}

\begin{hypothesis}[\threesum{} hypothesis]
    In the word RAM model with $O(\log n)$ bit words, there is no $O(n^{2-\epsilon})$ algorithm for \threesum{}, for any $\epsilon > 0$.
\end{hypothesis}

\begin{definition}[\exactclique{k}]
     For a constant $k\geq 3$, given an $n$-node graph with edge weights in $\{-n^{100k},\dots, n^{100k}\}$, determine whether the graph contains a $k$-clique whose edges sum to $0$.
\end{definition}

\begin{hypothesis}[\exactclique{k} hypothesis]
   In the word-RAM model with $O(\log n)$ bit words, there is no $O(n^{k-\epsilon})$ algorithm for $\exactclique{k}$, for any $\epsilon > 0$.
\end{hypothesis}

\section{Lower Bounds for 4-Cycle Enumeration}
In this section we prove \autoref{thm:four_cycle_lb} and show that the \fourcycle{} algorithm of Chiba and Nishizeki \cite{chiba1985arboricity} is tight. 

\FourCycleLB*


To prove \autoref{thm:four_cycle_lb}, we use a result by Jin and Xu \cite{jin2023removing} regarding the hardness of determining for every edge of a graph whether or not it is part of a triangle. This problem, formally known as the All-Edge Sparse Triangle problem, is defined as follows.

\begin{definition}[All-Edge Sparse Triangle]
    Given a graph $G = (V, E)$, determine for every edge whether it is contained in a triangle.
\end{definition}

In their work, Jin and Xu show that the All-Edge Sparse Triangle problem is hard on graphs with few $k$-cycles for any integer $k\geq 3$. More precisely, we will use the following technical parameterized lemma from \cite{jin2023removing}.

\begin{lemma}[Lemma 6.1 in \cite{jin2023removing}]\label{lm:triangletofourcycle}
    Fix any constant $\sigma \in (0, 0.5)$, and any integer $k \geq 3$. Under the \threesum{} hypothesis, it requires $\Omega(n^{2-o(1)})$ time to solve $n^{3\sigma}$ instances of All-Edges Sparse Triangle on tripartite graphs with $O(n^{1-\sigma})$ vertices and maximum degree $O(n^{0.5-\sigma}),$ such that the total number of cycles of length at most $k$ over all instances is $O(n^{k/2 - (k-3)\sigma)})$.
\end{lemma}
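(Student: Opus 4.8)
The plan is to reconstruct the by-now-standard reduction from \threesum{} to triangle-finding in sparse tripartite graphs --- Patrascu's reduction, the Kopelowitz--Pettie--Porat parameterization by bucket count, and the additive-structure removal of Abboud--Bringmann--Fischer and Jin--Xu --- and then to tune the bucket sizes so as to land exactly on the four stated targets (number of instances $n^{3\sigma}$, $O(n^{1-\sigma})$ vertices, maximum degree $O(n^{0.5-\sigma})$, and $O(n^{k/2-(k-3)\sigma})$ short cycles in total). First I would reduce \threesum{} on $n$ integers to a single instance of Convolution-3SUM of size $\tilde O(n)$ (given an array $A$, decide whether $A[i]+A[j]=A[i+j]$ for some $i\neq j$), using Patrascu's reduction together with its logarithmic-factor refinements, so that the conjectured $n^{2-o(1)}$ running time is preserved up to $n^{o(1)}$ factors. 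The point of passing through Convolution-3SUM is the \emph{index-interval} structure it provides: cutting the index range into $g$ consecutive blocks, any solution $i+j=\ell$ has its three indices in a block triple $(a,b,c)$ with $c\in\{a+b,a+b+1\}$, so only $O(g^2)$ of the $g^3$ block triples are relevant --- it is this quadratic-versus-cubic saving, rather than a cubic blow-up, that lets the instance count stay polynomially below $n^2$.

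Within each relevant block triple I would build one tripartite graph, using an almost-linear, nearly-balanced hash family to additionally split by \emph{value}: vertices in the three parts correspond to elements of the three index-blocks tagged by their value-bucket, and an edge is placed between two vertices from different parts exactly when the pair is consistent with extending to a common solution (here the almost-linearity pins down the third value-bucket of any would-be solution up to $O(1)$ choices). With the edges set up this way, a genuine Convolution-3SUM solution appears as a triangle in exactly one of the graphs, and --- once we have guaranteed few triangles overall --- each of the few triangles can be verified in $O(1)$ time, so the family of graphs faithfully encodes the original instance. Since solving All-Edges Sparse Triangle on a graph is at least as hard as deciding whether it contains a triangle, it suffices to prove the lower bound for triangle-finding on this family. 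Choosing $g$ and the number of value-buckets $R$ as suitable fixed powers of $n$ depending on $\sigma$, a direct computation yields $n^{3\sigma}$ graphs, each with $O(n^{1-\sigma})$ vertices and maximum degree (hence arboricity) $O(n^{0.5-\sigma})$; combined with the preserved $n^{2-o(1)}$ lower bound for Convolution-3SUM, solving all the instances must take $n^{2-o(1)}$ time, which is the running-time part of the claim.

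\textbf{The main obstacle} is the bound on the number of cycles of length at most $k$, which is exactly where the Jin--Xu/Abboud--Bringmann--Fischer technique is needed: the trivial bound --- edges per graph times $(n^{0.5-\sigma})^{k-2}$, summed over $n^{3\sigma}$ graphs --- comes out to $n^{k/2+1/2-(k-3)\sigma}$, a full factor of $n^{1/2}$ above the target, so one must genuinely exploit the structure of the instance. A cycle of length $\le k$ in one of the tripartite graphs unwinds into a solution of a system of $O(k)$ linear equations among $O(k)$ of the \threesum{} elements whose right-hand sides lie in a bounded set (a consequence of almost-linearity of the hash); hence ``few short cycles in every graph'' would follow from ``the \threesum{} instance contains no dense bounded additive configuration.'' I would therefore prepend a cleansing step to the reduction: repeatedly delete any element occurring in too many solutions of some $\le k$-term linear equation over the input, and argue by an additive-energy / double-counting estimate --- the kind of argument used by Jin and Xu --- that only a negligible fraction of elements is ever removed, so that (up to the randomness of the hash) no solution of the original \threesum{} instance is destroyed. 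Feeding the cleansed instance into the construction caps the number of $\le k$-cycles per graph, and summing over the $n^{3\sigma}$ graphs gives $O(n^{k/2-(k-3)\sigma})$. The delicate part throughout is that the cleansing threshold, the hash granularity $R$, and the block count $g$ must be chosen \emph{simultaneously} so as to meet all four targets at once; verifying this compatibility --- rather than any single reduction step --- is where most of the work goes.
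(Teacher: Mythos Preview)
The paper does not prove this lemma at all: it is quoted verbatim as Lemma~6.1 of Jin and Xu~\cite{jin2023removing} and used as a black box in the proof of Theorem~\ref{thm:four_cycle_lb}. There is therefore no ``paper's own proof'' to compare your proposal against.

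That said, your sketch is a faithful high-level outline of the Jin--Xu argument (Convolution-\threesum{}, index-blocking into $O(g^2)$ relevant triples, almost-linear hashing on values, and the energy-reduction preprocessing to kill dense additive configurations and hence short cycles). If you are reconstructing the proof for your own purposes this is the right roadmap, but for the purposes of \emph{this} paper no reconstruction is expected or provided --- a citation suffices.
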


In order to prove hardness for listing 4-cycles, we will begin with instances of All-Edge Sparse Triangle and construct a new graph where every triangle in the original graph corresponds to a 4-cycle in the new graph. To avoid having 4-cycles in the new graph that did not originate from a triangle, we make use of 4-cycle free graphs given by the following result by Brown \cite{brown1966graphs} and Erd\H{o}s et. al \cite{erdos1962problem}.

\begin{theorem}[\cite{brown1966graphs, erdos1962problem}]\label{thm:4-cycle-free}
    For every large $n$, there exists a 4-cycle free graph $G = (V, E)$ such that $|V| = \Theta(n)$, $|E| = \Theta(n^{3/2})$ with maximum degree $\Theta(n^{1/2})$. 
\end{theorem}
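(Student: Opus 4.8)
The plan is to realize the desired graph as the Erd\H{o}s--R\'enyi polarity graph of a projective plane over a finite field, which is the classical source of extremal $C_4$-free graphs. Given a large $n$, I would first use Bertrand's postulate to fix an odd prime $q = \Theta(\sqrt n)$ small enough that $q^2 + q + 1 \le n$, and work over $\mathbb{F}_q$. Let $V$ be the point set of $PG(2,q)$, i.e.\ the $q^2 + q + 1$ scaling-classes $[x]$ of nonzero vectors $x \in \mathbb{F}_q^3$, and let $G$ have an edge between distinct points $[x]$ and $[y]$ exactly when $\langle x, y \rangle := x_1 y_1 + x_2 y_2 + x_3 y_3 = 0$. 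The first routine step is to check that the vanishing of $\langle x, y\rangle$ is invariant under rescaling $x$ or $y$ and that the relation is symmetric, so that $G$ is a well-defined simple graph on $q^2 + q + 1 = \Theta(n)$ vertices; appending $n - |V|$ isolated vertices makes $|V|$ exactly $n$ and changes nothing below.

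Next I would establish $C_4$-freeness using the standard fact that a simple graph contains no $C_4$ if and only if every pair of distinct vertices has at most one common neighbor. A common neighbor of $[x]$ and $[y]$ is a point $[z]$ with $\langle x, z\rangle = \langle y, z\rangle = 0$; since $[x]\neq [y]$ forces $x$ and $y$ to be linearly independent, the set of vectors orthogonal to both is a one-dimensional subspace of $\mathbb{F}_q^3$ and hence corresponds to a single projective point. So every pair of distinct vertices has at most one common neighbor, and $G$ is $C_4$-free.

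For the degree and edge counts, the neighborhood of $[x]$ is the polar line $\{[z] : \langle x, z\rangle = 0\}$, which is a line of $PG(2,q)$ and therefore has exactly $q+1$ points, with $[x]$ itself excluded precisely when $[x]$ is \emph{absolute}, i.e.\ $\langle x, x\rangle = 0$. Hence every vertex of $G$ has degree $q$ or $q+1$, so the maximum degree is $\Theta(\sqrt n)$, and $|E| = \tfrac12\sum_v \deg(v) = \Theta(q^3) = \Theta(n^{3/2})$. I expect no genuine obstacle here: the construction is classical, and the only bookkeeping is that (i) the absolute points — those of degree $q$ — lie on a conic and so number at most $q+1$, which is why they are irrelevant to the asymptotics, and (ii) Bertrand's postulate is what lets us hit a prime of the right size for \emph{every} large $n$ rather than for a thin subsequence. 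Both points are routine, and nothing in the argument requires the finer number theory used to pin down the exact extremal function for $C_4$.
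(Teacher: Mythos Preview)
Your proposal is correct and is precisely the classical Erd\H{o}s--R\'enyi/Brown polarity-graph construction that the cited references \cite{brown1966graphs, erdos1962problem} use. Note, however, that the paper itself offers no proof of this statement: it is quoted as a black-box result from the literature and is only used later as a gadget for padding. So there is nothing to compare against beyond observing that you have reproduced the standard argument behind the citation; in particular, your handling of the three nontrivial points (well-definedness of the orthogonality relation on projective points, the linear-algebra reason two distinct points have a unique common ``polar'' neighbor, and the $q$ versus $q+1$ degree split at absolute points) is accurate, and invoking Bertrand to hit every large $n$ rather than only $n$ of the form $q^2+q+1$ is the right move.
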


We can now prove \autoref{thm:four_cycle_lb}.

\begin{proof}[Proof of \autoref{thm:four_cycle_lb}]
    Let $\Delta = O(m^{1/3-\mu})$ be the maximum degree of the input graph $G$, for some $\mu \in (0, 1/3)$. It suffices to show that there is no \fourcycle{} algorithm with $O(n^{1-\delta} \Delta^2)$ pre-processing time and $n^{o(1)}$ delay for some $\delta > 0.$ Since $n\cdot \Delta \geq m$ and $\Delta \geq \alpha,$ this gives our desired bound.


    Pick $\sigma = \frac{1-3\mu}{2-4\mu}$. It is easy to verify that $\sigma \in (0, 0.5).$ Consider the $n^{3\sigma}$ instances of All-Edges Sparse triangle as given in \autoref{lm:triangletofourcycle}. 
    
    Note that the resulting graph in each instance satisfies that max degree $\Delta = O(n^{0.5 - \sigma})$, number of vertices $n_0 = O(n^{1-\sigma})$, and number of edges $m_0 = O(n_0 \cdot d) = O(n^{1.5-2\sigma})$. To guarantee a lower bound of the number of vertices, edges and maximum degree, pad each instance with $n^\sigma$ ``dummy'' copies of a 4-cycle free graph as given in \autoref{thm:4-cycle-free} with $\Tilde{n} = n^{1-2\sigma}$. More formally, for every instance of All-Edge Sparse Triangle, construct a 4-cycle free graph with $\Tilde{n} = n^{1-2\sigma}$ vertices, $\Theta(\Tilde{n}^{3/2})$ edges and maximum degree $\Theta(\Tilde{n}^{1/2})$, and take the union of $n^\sigma$ copies of this graph and the All-Edge Sparse Triangle instance.
    
    Each such 4-cycle free graph we construct has $\Tilde{n}=\Theta(n^{1-2\sigma})$ vertices, $\Tilde{m}=\Theta(n^{1.5-3\sigma})$ edges and max-degree $\Theta(n^{0.5 - \sigma})$. Moreover, since any two triangles in a 4-cycle free graph must be edge-disjoint, each such graph has at most $O(\Tilde{m})=O(n^{1.5-3\sigma})$ triangles. 
    
    After taking its union with  $n^\sigma$ copies of the 4-cycle free graph, each resulting instance of All-Edge Sparse Triangle has $\Theta(n^{1-\sigma})$ vertices, $\Theta(n^{1.5-2\sigma})$ edges, and max-degree $\Theta(n^{0.5-\sigma})$. Each overall instance now satisfies that the maximum degree $\Delta = \Theta(n^{0.5 - \sigma})=\Theta(m^{1/3-\mu})$, and has at most $O(n^{1.5-2\sigma})$ triangles. In total, the $n^{3\sigma}$ instances of All-Edge Sparse Triangle put together have $O(n^{1.5+\sigma})$ triangles.
    
    Let $G = (V,E)$ be a graph in one such padded All-Edge Sparse Triangle instance. We can assume that $G$ is tripartite by using the color coding technique of Alon, Yuster and Zwick \cite{colorcoding} as follows. Assign each vertex in $V$ one of three colors uniformly and remove all edges whose two endpoints have the same color. The resulting graph is tripartite and every triangle in the original graph is maintained with constant probability. Therefore, if we are able to list all triangles in the colored version of $G$, then we can repeat the process of coloring and listing $\Tilde{O}(1)$ times and list every triangle in $G$ with high probability.
    
    Denote the tripartite vertex set of $G$ by $V = A\sqcup B\sqcup C$. Construct the 4-partite graph $G' = (V', E')$ as follows. Take $V'$ to be $A\cup B\cup C\cup C'$ where $C'$ is a copy of $C$. Take $E' = E(A,B)\cup E(B,C)\cup E(C',A)$ where $E(A,B),E(B,C)$ are the edges in $G$ between $A,B$ and $B,C$ respectively, and $E(C',A)$ is a copy of the edges in $G$ between $C$ and $A$. Finally, add a matching between $C$ and $C'$, where we match a vertex in $C$ to its copy in $C'$.

    Every triangle $a,b,c$ in the graph $G$ corresponds to the 4-cycle $a,b,c,c'$ in $G'$. 
    Therefore, listing all 4-cycles in $G'$ solves \trianglelist{} on $G$ and thus solves All-Edge Sparse Triangle. Note that every 4-cycle in $G'$ originated in either a triangle or a 4-cycle in $G$. Recall that there are at most $O(n^{1.5} + n^{1.5 + \sigma}) = O(n^{1.5+\sigma})$ triangles and $O(n^{2-\sigma})$ 4-cycles in $G$. 


    Assume for contradiction there exists a \fourcycle{} algorithm with $O(N^{1-\delta}D^2)$ preprocessing time and $O(N^{o(1)})$ delay in a graph with $N$ vertices and maximum degree $D$. The graph $G'$ has $N = n^{1-\sigma}$ vertices and maximum degree $D = n^{0.5-\sigma}$, with $t$ 4-cycles. Therefore, the time required to list all 4-cycles in $G'$ is bounded by
    \[
    O\left(n^{(1-\sigma)(1-\delta)}\cdot n^{1 - 2\sigma}+  n^{o(1)} \cdot t\right)=
    O\left(n^{2 - \delta(1-\sigma) - 3\sigma} + n^{o(1)}\cdot t\right).
    \]

    Summing over all $n^{3\sigma}$ instances of $G'$, the total number of 4-cycles we will have to list is equal to the total number of triangles and 4-cycles in the original instances of the \trianglelist{} problems and so can be bounded by $O(n^{2-\sigma})$.
    
    Therefore, one can solve all $n^{3\sigma}$ instances of \trianglelist{} (and hence also all instances of All-Edges Sparse Triangle) in 
    \[O(n^{3\sigma} \cdot n^{2 - \delta(1-\sigma) - 3\sigma} + n^{1.5 + \sigma} \cdot n^{o(1)} + n^{2-\sigma} \cdot n^{o(1)}) = O(n^{2-\delta(1-\sigma)} + n^{1.5 + \sigma + o(1)} + n^{2 - \sigma + o(1)})
    \]
    time. Note that since $\delta(1-\sigma) \in (0.5\delta, \delta)$ and $\sigma \in (0, 0.5)$, this runtime is in fact sub-quadratic, thereby leading to a contradiction. 
    
    We conclude that under the \threesum{} hypothesis, no algorithm with $O(m^{1-\varepsilon}\alpha)$ preprocessing time and $m^{o(1)}$ delay can list all 4-cycles in the graph. Therefore, Chiba and Nishizeki's \cite{chiba1985arboricity} 4-cycle listing algorithm (\autoref{thm:CN-four-cycle}) is in fact tight.

\end{proof}

\section{Lower Bounds for \texorpdfstring{$k$}{k}-clique Enumeration}
In this section we prove \autoref{thm:k_clique_lb} and show that the \cliquelist{k} algorithm proposed in \cite{chiba1985arboricity} is essentially tight. We build on the result of Vassilevska W. and Xu \cite{williams2020monochromatic}, who show that triangle listing in a graph with $m$ edges and arboricity $\alpha$ requires $\Omega(m\alpha^{1-o(1)})$ time. 
We extend this result to the problem of \cliquelist{k} using a generalized approach to edge-weight hashing introduced in \cite{dalirrooyfard2023listing}. Our proof reduces the problem of finding a zero-weight $k$-clique to the problem of listing $k$-cliques, and relies on the \exactclique{k} hypothesis.

\KCliqueLB*

\begin{proof}
    Similar to the proof of \autoref{thm:four_cycle_lb}, we show that \cliquelist{k} in a graph with maximum degree $\Delta$ requires $\Omega(n^{1-O(\delta)}\cdot \Delta^{k-1})$ time. Since $n\cdot \Delta \geq m$ and $\Delta \geq \alpha$ this gives us our desired bound. 

    As we showed in the case of triangle listing, we can use a standard color coding technique \cite{colorcoding} to assume that the graph is $k$-partite. Denote the vertex set of $G$ by $V = V_1 \sqcup V_2\sqcup \ldots \sqcup V_k$, each of size $n$. 

    We begin by using an edge hashing scheme introduced in \cite{dalirrooyfard2023listing} to modify the edge weights of the graph. This will ensure that the edges of a clique behave as if their weights were assigned uniformly and independently at random.
    
    Denote the original weight of an edge $uv\in E$ by $w(u,v)$ and assume all $w(u,v)\in \FF_p$ for some prime $p$. Sample a value $x\sim \FF_p$ uniformly at random. For each $i\in [k]$ and $v\in V_i$  sample $k-1$ values $y_{v,1}, y_{v,2}, \ldots, y_{v,i-1},y_{v,i+1},\ldots, y_{v,k}$  such that $\sum_{j\neq i}y_{v,j} = 0$. Define a new weight function on the edges as follows. For any $1\leq i < j \leq k$ and any $v_i\in V_i, v_j \in V_j$ define 
    \[
    w'(v_i, v_j) = x\cdot w(v_i, v_j) + y_{v_i, j} + y_{v_j, i}.
    \]
    
    Informally, these weights act like uniformly random, independently sampled random variables. Formally, the modified weight function $w'$ has the following property:
    \begin{lemma} [Lemma 4.2 and Corollary 4.3, in the full version of \cite{dalirrooyfard2023listing}]
        If $(v_1, \ldots, v_k)$ is not an exact-$k$-clique w.r.t $w$, and $(u_1, \ldots, u_k)$ shares exactly $c$ vertices indexed by $S$ with $(v_1, \ldots, v_k)$, where $1\leq c \leq k$, then the random variables 
        \[
        \left\{w'(u_i, u_j) - w'(v_i, v_j)\right\}_{1 \leq i < j \leq k, i\notin S \lor j\notin S}
        \]
        are independent. Furthermore, for any $k$-clique on vertices $(v_i)_{i\in T}$ that shares exactly $c$ vertices indexed by $S$ with $(u_1, \ldots, u_k)$, the random variables 
        \[
        \left\{w'(u_i, u_j) - w'(v_i, v_j)\right\}_{i,j\in T, i < j, i\notin S \lor j\notin S}
        \]
        are independent.
    \end{lemma}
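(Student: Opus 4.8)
The plan is to strip away the structure of $w'$ until the independence claim becomes a statement about the rank of an explicit matrix over $\FF_p$, and then to verify that rank; both parts of the lemma have the same shape, so I would do the first and indicate the change for the second.

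First I would record the telescoping identity: for \emph{any} $k$-tuple $(a_1,\dots,a_k)$ with $a_i\in V_i$ we have $\sum_{i<j}w'(a_i,a_j)=x\sum_{i<j}w(a_i,a_j)+\sum_i\bigl(\sum_{j\ne i}y_{a_i,j}\bigr)=x\sum_{i<j}w(a_i,a_j)$, since each vertex contributes its entire $y$-block, which sums to $0$; in particular the $w'$-weight of a clique is exactly $x$ times its total $w$-weight $W_w(\cdot)$, so exact cliques are preserved whenever $x\ne0$. Next, for a pair $i<j$ in the index set, expanding the definition of $w'$ gives
\[ w'(u_i,u_j)-w'(v_i,v_j)=x\bigl(w(u_i,u_j)-w(v_i,v_j)\bigr)+(y_{u_i,j}-y_{v_i,j})+(y_{u_j,i}-y_{v_j,i}), \]
where $(y_{u_i,j}-y_{v_i,j})$ vanishes when $i\in S$ and $(y_{u_j,i}-y_{v_j,i})$ vanishes when $j\in S$ (because $u_\ell=v_\ell$ for $\ell\in S$). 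So every difference is a fixed scalar times $x$ plus a linear combination of the mismatch variables $\delta_{\ell,m}:=y_{u_\ell,m}-y_{v_\ell,m}$, ranging over changed coordinates $\ell\in T:=[k]\setminus S$ and $m\ne\ell$.

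Now I would reparametrize: for each $\ell\in T$ the two vectors $(y_{u_\ell,m})_m$ and $(y_{v_\ell,m})_m$ are independent and uniform on the sum-zero subgroup $W_\ell\subseteq\FF_p^{k-1}$ of dimension $k-2$, hence $\delta_{\ell,\cdot}$ is uniform on $W_\ell$, the blocks $\delta_{\ell,\cdot}$ are independent across $\ell\in T$, and all are independent of $x$. Thus the whole family of differences is an affine-linear image of the uniform random vector $\bigl(x,(\delta_{\ell,\cdot})_{\ell\in T}\bigr)$ on $\FF_p\times\prod_{\ell\in T}W_\ell$, and a family of variables obtained from a uniform vector by an affine map is mutually independent (with uniform marginals) precisely when the corresponding linear functionals are linearly independent, i.e.\ when the coefficient matrix $M$ of the map has full row rank. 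So everything reduces to showing $M$ has full row rank.

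That rank bound is where the work is, and I expect it to be the main obstacle. With $t:=|T|=k-c$, $M$ has $\binom{t}{2}+tc$ rows and $1+t(k-2)$ columns, and one checks $\binom{t}{2}+tc\le1+t(k-2)$ always, with equality exactly when $t\in\{1,2\}$. For $t\ge3$ I expect the $\delta$-columns alone to span the row space — group rows by which changed vertices they touch, use that each block $W_\ell$ has dimension $k-2$, and note the sum-zero constraint only mildly deforms an identity-like pattern — so no hypothesis on $v$ is needed there. The delicate cases are $t\in\{1,2\}$, where $M$ is square: summing all rows and using that each $\delta_{\ell,\cdot}$ block sums to $0$ shows the row-sum equals $\bigl(W_w(u)-W_w(v)\bigr)x$, and row-reducing this into the matrix and cofactor-expanding along the resulting row collapses the rest to an identity block, giving $\det M=\pm\bigl(W_w(u)-W_w(v)\bigr)$. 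Hence $M$ is invertible iff $W_w(u)\ne W_w(v)$, which is exactly where the hypotheses enter: $W_w(v)\ne0$ since the tuple $v$ is not an exact clique, while $W_w(u)=0$ for the exact clique $u$ one compares against in the reduction (and in the second part the two cliques are set up so their weights differ). Finally, the second part is the same statement with the arbitrary tuple replaced by a second $k$-clique and the row set further restricted to pairs inside its index set; since a subset of a linearly independent set of rows is still linearly independent, the argument carries over verbatim.
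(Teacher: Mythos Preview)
The paper does not prove this lemma at all; it is quoted from \cite{dalirrooyfard2023listing} (their Lemma~4.2 and Corollary~4.3) and used as a black box, so there is no in-paper argument to compare against. Your reduction to a rank computation over $\FF_p$ is the natural approach and almost certainly mirrors the original DMVX proof: expand the differences, note that they are an affine image of the uniform vector $(x,(\delta_{\ell,\cdot})_{\ell\in T})$, and show the coefficient matrix $M$ has full row rank.

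That said, there is a genuine gap in matching the hypotheses. Your own computation shows that in the square cases $t\in\{1,2\}$ one has $\det M=\pm\bigl(W_w(u)-W_w(v)\bigr)$, so independence holds iff $W_w(u)\ne W_w(v)$. The lemma as written here only assumes $W_w(v)\ne0$ and says nothing about $u$; you patch this by declaring that ``$W_w(u)=0$ for the exact clique $u$ one compares against in the reduction,'' but that is importing context from the application, not proving the stated lemma. Indeed, the statement as literally printed in this paper is false without such an extra hypothesis (take $t=1$ and any $u$ with $W_w(u)=W_w(v)\ne0$). This is almost certainly an artifact of the paper's compressed paraphrase of two separate DMVX results rather than an error in your reasoning, but you should flag that you are proving the version with the additional assumption that $u$ is an exact clique (or more generally $W_w(u)\ne W_w(v)$). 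Similarly, your one-line treatment of the second part (``a subset of a linearly independent set of rows is still linearly independent'') is too quick: the second part has different hypotheses (no ``not exact'' assumption appears), the index set $T$ there is not the same object as your $T=[k]\setminus S$, and you again need to say where the analogue of $W_w(u)\ne W_w(v)$ comes from.
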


    Denote by $L$ the range of values of the modified weight function and let $s$ be a parameter to be set at a later time. Partition $L$ into $s$ continuous intervals of length $O(L/s)$, $L_1, \ldots, L_s$. For any ${k \choose 2}$ tuple of intervals, $(L_{i_1}, L_{i_2}, \ldots, L_{i_{k \choose 2}})$, consider the graph $G_{i_1, \ldots, i_{k\choose 2}}$ obtained by keeping the edges between each pair of sets $V_i, V_j$ whose weights fall into the appropriate range (e.g. keeping the edges between $V_1$ and $V_2$ whose weights fall in the range $L_{i_1}$, keeping the edges between $V_1$ and $V_3$ whose weights fall in the range $L_{i_2}$ and so on). Note that to find a zero-$k$-clique we only need to consider ${k \choose 2}$ tuples such that $0 \in L_{i_1} + \ldots + L_{i_{k \choose 2}}$. Since the sumset $L_{i_1} + \ldots + L_{i_{k\choose 2} - 1}$ is of size $\leq ({k \choose 2} - 1) \cdot s = O(s)$, there are $O(1)$ possibilities for the choice of $L_{i_{k \choose 2}}$. Therefore, it suffices to consider $O(s^{{k \choose 2} - 1})$ tuples of intervals to find any zero-$k$-clique instance.
    
    For each such graph graph $G_{i_1, \ldots, i_{k\choose 2}}$, with high probability, the degree of each vertex is $O(n/s)$. Furthermore, with high probability the number of $k$-cliques in each instance is $O\left(\frac{n^k}{s^{k \choose 2}}\right)$, since for every set of $k$ nodes, each edge between them remains with probability $\frac{1}{s}$.
    
    We can now use this partition to solve \exactclique{k}. For each instance $G_{i_1, \ldots, i_{k\choose 2}}$ such that $0\in L_{i_1} + \ldots + L_{i_{k\choose 2}}$, run $k$-clique listing. For each $k$-clique listed by the algorithm, check if it is a Zero-$k$-clique. Clearly this algorithm will find a Zero-$k$-clique if one exists. We are left to calculate its running time.
    
    Suppose that we have an algorithm for $k$-clique listing that runs in time $O(n^{1-\epsilon}\cdot \Delta^{k-1})$. We run $k$-clique listing $s^{{k \choose 2} - 1}$ times, where each instance has degree $\Delta \leq  \frac{n}{s}$. The total running time of this is
    \[
    s^{{k \choose 2} - 1}\cdot n^{1 - \epsilon}\cdot \left( \frac{n}{s} \right)^{k-1} = 
    n^{k - \epsilon}\cdot s^{{k \choose 2} - k}.
    \]
    
    Since each instance generates at most $O\left(\frac{n^k}{s^{k \choose 2}}\right)$ $k$-cliques, checking whether any $k$-clique is a Zero-$k$-clique takes
    \[
    s^{{k \choose 2} - 1}\cdot \frac{n^k}{s^{k \choose 2}} = \frac{n^k}{s}.
    \]
    Setting $s = n^{\frac{\epsilon}{{k \choose 2} - k  + 1}} = n^{\frac{2 \epsilon}{k^2 - 3k + 2}}$ gives us a total running time of $O(n^{k- \frac{2 \epsilon}{k^2 - 3k + 2}})$. 
    
    Therefore, an $n^{1-\epsilon}\Delta^{k-1}$ time algorithm for $k$-clique listing on graphs with $\alpha \leq \Delta \leq n^{1 - \Theta(\epsilon)}$ gives an $o(n^k)$ time algorithm for Zero-$k$-clique, thus contradicting the \exactclique{k} hypothesis. We conclude that under the \exactclique{k} hypothesis, if $\alpha = n^{1-\delta}$, no $O(m^{1-O(\delta)}\alpha^{k-2})$ time algorithm can list all $k$-cliques in the graph.

\end{proof}

\ifnum\anon=0
\section*{Acknowledgements}
We would like to thank Virginia Vassilevska Williams and Ryan Williams for their enthusiasm and guidance. We would like to thank Yinzhan Xu for many helpful discussions.
\fi

\bibliography{ref}

\newcommand{\etalchar}[1]{$^{#1}$}
\begin{thebibliography}{DMVWX24}

\bibitem[ABDN18]{abboud2018more}
Amir Abboud, Karl Bringmann, Holger Dell, and Jesper Nederlof.
\newblock More consequences of falsifying seth and the orthogonal vectors
  conjecture.
\newblock In {\em Proceedings of the 50th Annual ACM SIGACT Symposium on Theory
  of Computing (STOC)}, pages 253--266, 2018.

\bibitem[ABF23]{abboud2023stronger}
Amir Abboud, Karl Bringmann, and Nick Fischer.
\newblock Stronger 3-sum lower bounds for approximate distance oracles via
  additive combinatorics.
\newblock In {\em Proceedings of the 55th Annual ACM Symposium on Theory of
  Computing}, pages 391--404, 2023.

\bibitem[AKLS22]{abboud2022listing}
Amir Abboud, Seri Khoury, Oree Leibowitz, and Ron Safier.
\newblock Listing 4-cycles.
\newblock {\em arXiv preprint arXiv:2211.10022}, 2022.

\bibitem[AVW14]{AbboudWW14}
Amir Abboud, {Virginia {Vassilevska Williams}}, and Oren Weimann.
\newblock Consequences of faster alignment of sequences.
\newblock In {\em Proceedings of the 41st International Colloquium on Automata,
  Languages, and Programming (ICALP)}, pages 39--51, 2014.

\bibitem[AYZ95]{colorcoding}
Noga Alon, Raphael Yuster, and Uri Zwick.
\newblock Color-coding.
\newblock {\em J. ACM}, 42(4):844–856, jul 1995.

\bibitem[AYZ97]{findingcycles}
Noga Alon, Raphael Yuster, and Uri Zwick.
\newblock Finding and counting given length cycles.
\newblock {\em Algorithmica}, 17(3):209--223, 1997.

\bibitem[BCM22]{BringmannCM22}
Karl Bringmann, Nofar Carmeli, and Stefan Mengel.
\newblock Tight fine-grained bounds for direct access on join queries.
\newblock In {\em Proceedings of the 41st ACM SIGMOD-SIGACT-SIGAI Symposium on
  Principles of Database Systems (PODS)}, pages 427--436, 2022.

\bibitem[BDT16]{BackursDT16}
Arturs Backurs, Nishanth Dikkala, and Christos Tzamos.
\newblock Tight hardness results for maximum weight rectangles.
\newblock In {\em Proceedings of the 43rd International Colloquium on Automata,
  Languages, and Programming (ICALP)}, pages 81:1--81:13, 2016.

\bibitem[BGMW20]{BringmannGMW20}
Karl Bringmann, Pawel Gawrychowski, Shay Mozes, and Oren Weimann.
\newblock Tree edit distance cannot be computed in strongly subcubic time
  (unless {APSP} can).
\newblock {\em {ACM} Trans. Algorithms}, 16(4):48:1--48:22, 2020.

\bibitem[BPVZ14]{bjorklund2014listing}
Andreas Bj{\"o}rklund, Rasmus Pagh, Virginia {Vassilevska Williams}, and Uri
  Zwick.
\newblock Listing triangles.
\newblock In {\em Proceedings of the 41st International Colloquium on Automata,
  Languages, and Programming (ICALP)}, pages 223--234, 2014.

\bibitem[Bro66]{brown1966graphs}
William~G Brown.
\newblock On graphs that do not contain a thomsen graph.
\newblock {\em Canadian Mathematical Bulletin}, 9(3):281--285, 1966.

\bibitem[BT17]{BackursT17}
Arturs Backurs and Christos Tzamos.
\newblock Improving viterbi is hard: Better runtimes imply faster clique
  algorithms.
\newblock In {\em Proceedings of the 34th International Conference on Machine
  Learning (ICML)}, pages 311--321, 2017.

\bibitem[CC11]{ChuC11}
Shumo Chu and James Cheng.
\newblock Triangle listing in massive networks and its applications.
\newblock In {\em Proceedings of the 17th {ACM} {SIGKDD} International
  Conference on Knowledge Discovery and Data Mining (KDD)}, pages 672--680,
  2011.

\bibitem[CN85]{chiba1985arboricity}
Norishige Chiba and Takao Nishizeki.
\newblock Arboricity and subgraph listing algorithms.
\newblock {\em SIAM Journal on computing}, 14(1):210--223, 1985.

\bibitem[DBS18]{danisch2018listing}
Maximilien Danisch, Oana Balalau, and Mauro Sozio.
\newblock Listing k-cliques in sparse real-world graphs.
\newblock In {\em Proceedings of the 2018 World Wide Web Conference}, pages
  589--598, 2018.

\bibitem[DGP09]{dourisboure2009extraction}
Yon Dourisboure, Filippo Geraci, and Marco Pellegrini.
\newblock Extraction and classification of dense implicit communities in the
  web graph.
\newblock {\em ACM Transactions on the Web (TWEB)}, 3(2):1--36, 2009.

\bibitem[DMVWX24]{dalirrooyfard2023listing}
Mina Dalirrooyfard, Surya Mathialagan, Virginia Vassilevska~Williams, and
  Yinzhan Xu.
\newblock Towards optimal output-sensitive clique listing or: Listing cliques
  from smaller cliques.
\newblock In {\em Proceedings of the 56th Annual ACM Symposium on Theory of
  Computing}, STOC 2024, page 923–934, New York, NY, USA, 2024. Association
  for Computing Machinery.

\bibitem[EGMT17]{eppstein2017}
David Eppstein, Michael~T. Goodrich, Michael Mitzenmacher, and Manuel~R.
  Torres.
\newblock 2-3 cuckoo filters for faster triangle listing and set intersection.
\newblock In {\em Proceedings of the 36th ACM SIGMOD-SIGACT-SIGAI Symposium on
  Principles of Database Systems}, PODS '17, page 247–260, New York, NY, USA,
  2017. Association for Computing Machinery.

\bibitem[ERS62]{erdos1962problem}
Paul Erdos, Alfr{\'e}d R{\'e}nyi, and VT~S{\'o}s.
\newblock On a problem in the theory of graphs.
\newblock {\em Publ. Math. Inst. Hungar. Acad. Sci}, 7:215--235, 1962.

\bibitem[FNBB06]{fratkin2006motifcut}
Eugene Fratkin, Brian~T Naughton, Douglas~L Brutlag, and Serafim Batzoglou.
\newblock Motifcut: regulatory motifs finding with maximum density subgraphs.
\newblock {\em Bioinformatics}, 22(14):e150--e157, 2006.

\bibitem[JX23]{jin2023removing}
Ce~Jin and Yinzhan Xu.
\newblock Removing additive structure in 3sum-based reductions.
\newblock In {\em Proceedings of the 55th Annual ACM Symposium on Theory of
  Computing}, pages 405--418, 2023.

\bibitem[KPP15]{kopelowitz2015dynamic}
Tsvi Kopelowitz, Seth Pettie, and Ely Porat.
\newblock Dynamic set intersection.
\newblock In {\em Algorithms and Data Structures: 14th International Symposium,
  WADS 2015, Victoria, BC, Canada, August 5-7, 2015. Proceedings 14}, pages
  470--481. Springer, 2015.

\bibitem[KPP16]{kopelowitz2016higher}
Tsvi Kopelowitz, Seth Pettie, and Ely Porat.
\newblock Higher lower bounds from the 3sum conjecture.
\newblock In {\em Proceedings of the twenty-seventh annual ACM-SIAM symposium
  on Discrete algorithms}, pages 1272--1287. SIAM, 2016.

\bibitem[Lat08]{trilistlatapy}
Matthieu Latapy.
\newblock Main-memory triangle computations for very large (sparse (power-law))
  graphs.
\newblock {\em Theor. Comput. Sci.}, 407(1):458--473, 2008.

\bibitem[LVW18]{LincolnWW18}
Andrea Lincoln, Virginia {Vassilevska Williams}, and R.~Ryan Williams.
\newblock Tight hardness for shortest cycles and paths in sparse graphs.
\newblock In {\em Proceedings of the 29th Annual {ACM-SIAM} Symposium on
  Discrete Algorithms (SODA)}, pages 1236--1252, 2018.

\bibitem[PSV17]{count5via3}
Ali Pinar, C.~Seshadhri, and Vaidyanathan Vishal.
\newblock Escape: Efficiently counting all 5-vertex subgraphs.
\newblock In {\em Proceedings of the 26th International Conference on World
  Wide Web (WWW)}, pages 1431--1440, 2017.

\bibitem[QCQ{\etalchar{+}}18]{qiu2018real}
Xiafei Qiu, Wubin Cen, Zhengping Qian, You Peng, Ying Zhang, Xuemin Lin, and
  Jingren Zhou.
\newblock Real-time constrained cycle detection in large dynamic graphs.
\newblock {\em Proceedings of the VLDB Endowment}, 11(12):1876--1888, 2018.

\bibitem[SDS21]{shi2021parallel}
Jessica Shi, Laxman Dhulipala, and Julian Shun.
\newblock Parallel clique counting and peeling algorithms.
\newblock In {\em SIAM Conference on Applied and Computational Discrete
  Algorithms (ACDA21)}, pages 135--146. SIAM, 2021.

\bibitem[SSPUVc15]{listingcliquesnucleus}
Ahmet~Erdem Sariy\"{u}ce, C.~Seshadhri, Ali Pinar, and \"{U}mit
  V.~\c{C}ataly\"{u}rek.
\newblock Finding the hierarchy of dense subgraphs using nucleus
  decompositions.
\newblock In {\em Proceedings of the International Conference on the World Wide
  Web (WWW)}, pages 927--937, 2015.

\bibitem[ST15]{ShunT15}
Julian Shun and Kanat Tangwongsan.
\newblock Multicore triangle computations without tuning.
\newblock In {\em Proceedings of the 31st {IEEE} International Conference on
  Data Engineering (ICDE)}, pages 149--160, 2015.

\bibitem[SW05]{SchankW05}
Thomas Schank and Dorothea Wagner.
\newblock Finding, counting and listing all triangles in large graphs, an
  experimental study.
\newblock In {\em Proceedings of the 4th International Conference on
  Experimental and Efficient Algorithms (WEA)}, pages 606--609, 2005.

\bibitem[Tso15]{listingcliquesdensest}
Charalampos Tsourakakis.
\newblock The k-clique densest subgraph problem.
\newblock In {\em Proceedings of the International Conference on the World Wide
  Web (WWW)}, pages 1122--1132, 2015.

\bibitem[{Vas}18]{vsurvey}
Virginia {Vassilevska Williams}.
\newblock On some fine-grained questions in algorithms and complexity.
\newblock In {\em Proceedings of the ICM}, volume~3, pages 3431--3472. World
  Scientific, 2018.

\bibitem[VWX20]{williams2020monochromatic}
Virginia Vassilevska~Williams and Yinzhan Xu.
\newblock Monochromatic triangles, triangle listing and apsp.
\newblock In {\em 2020 IEEE 61st Annual Symposium on Foundations of Computer
  Science (FOCS)}, pages 786--797. IEEE, 2020.

\bibitem[YBLG17]{localclustering}
Hao Yin, Austin~R Benson, Jure Leskovec, and David~F Gleich.
\newblock Local higher-order graph clustering.
\newblock In {\em Proceedings of the 23rd ACM SIGKDD international conference
  on knowledge discovery and data mining}, pages 555--564, 2017.

\end{thebibliography}

\end{document}